\newtheorem{theorem}{Theorem}
\newtheorem{observation}{Observation}
\theoremstyle{definition}
\newtheorem{remark}{Remark}
\newtheorem{example}{Example}
\newtheorem{definition}{Definition}
\newtheorem{algorithm}{Algorithm}
\DeclareMathOperator*{\argmin}{arg\,min}
\DeclareMathOperator*{\argmax}{arg\,max}
\newcommand{\pref}{\succ}
\newcommand{\myheading}[1]{\medskip\noindent\textbf{#1}}
\newcommand{\iti}{{\it i}}
\newcommand{\itii}{{\it ii}}
\begin{document}

\title{Reality-Aware Social Choice\thanks{%
  A preliminary version of this paper was presented at
  the BlueSky track of
  the 17th International Conference on Autonomous Agents and Multiagent Systems (AAMAS '17)~\cite{preliminary}.}
}

\author{Ehud Shapiro \and Nimrod Talmon}


\maketitle

\begin{abstract}
Social Choice theory generalizes voting on one proposal to ranking multiple
proposals.  Yet, while a vote on a single proposal has the status quo (Reality) as a default, Reality has been forsaken during this generalization. Here, we propose to restore this default social state and to incorporate Reality explicitly into Social Choice.
We show that doing so gives rise to a new theory, complete with its domain restrictions, voting rules with their Reality-aware axiomatic properties, and certain game-theoretic aspects. In particular, we show how Reality can be used in a principled way to break Condorcet cycles and develop an efficient Reality-aware Condorcet-consistent agenda. We then discuss several applications of Reality-Aware Social Choice.
\end{abstract}


\section{Introduction}\label{section:introduction}

When voting on a proposal one in fact chooses between two alternatives: (\iti)
  A new hypothetical social state depicted by the proposal and (\itii) the status quo or the present reality (henceforth: \emph{Reality});
a \emph{Yes} vote favors a transition to the proposed hypothetical state, while a \emph{No} vote favors Reality.\footnote{
We capitalize \emph{Reality} to denote that it is a variable that has a value  --- the present state of affairs --- that can change over time, as as reality changes due to our decisions and actions, as well as external events.  See discussions below on Possible Worlds Semantics and Democratic Action Plans. Furthermore, we prefer Reality over the status quo to emphasize our view that this social state is evolving.}

The standard model of Social Choice, while generalizing from voting on one proposal to voting on several proposals, does not give any special consideration to the present social state.
  For example, neither Arrow~\cite{arrowbookfirst}, nor Sen~\cite{sen1986social}, nor Black~\cite{duncan1958theory} incorporate Reality into their models;
the same holds for the textbook on Computational Social Choice~\cite{moulin2016handbook}.
Here we propose a new model of Social Choice that
incorporates Reality explicitly as an \emph{ever-present}, \emph{always-relevant}, and \emph{evolving} social state, \emph{distinguished} from hypothetical social states.

\newpage

\myheading{Distinguished --}
In the words of Arrow's later-years comment: 
\begin{displayquote}
\emph{``... an important empirical truth, especially about legislative matter rather than the choice of candidates:
The status quo does have a built-in edge over all alternative proposals.''}~\cite[Page 95]{arrowbooksecond}
\end{displayquote}

James Buchanan, another Social Choice luminary, stated that: 
\begin{displayquote}
``\emph{The uniqueness of the status quo lies in the simple fact of its existence. [...]  This elementary distinction between the status quo and its idealized alternatives is often overlooked. [...] Any proposal for change involves the status quo as the necessary starting point.}''~(\cite{buchanan1975limits}, p.75)
\end{displayquote}

\myheading{Ever-present --}
since every voter should have the opportunity to prefer the status quo over some or all other proposed alternatives. This opportunity is necessarily available when voting on a single alternative---chosen by voting \emph{No}---and there can be no justification for revoking this right when the number of alternatives increases. Yet, the standard model of Social Choice, by failing to incorporate the status quo as an ever-present alternative, effectively allows those who set up a vote to veto the status quo. 

For example, the governing body may decide unilaterally that a change is needed and offer a closed list of proposed changes to the public.  Then,  seemingly-democratically, proceed with the winning proposal, but without allowing the public to oppose the governing body's unilateral decision that a change is needed or, equivalently, without allowing the voters to oppose all government-sponsored proposals for change, by voting for the status quo.

In this sense, the requirement to incorporate the status quo as an alternative when voting on a closed list of proposals is reminiscent of Dodgson's suggestion to incorporate a fictitious protest candidate in elections, in order to allow a voter to oppose all proposed candidates~\cite{dodgson1873discussion}.  In the same vein, Clark Kerr has aptly stated more than half a century ago that:
\begin{displayquote}
``\emph{the status quo is the only solution that cannot be vetoed.}''~\cite{kerr2001uses} 
\end{displayquote}
Indeed, failing to incorporate the status quo as an ever-present alternative effectively allows those who set up a vote the choice to veto the status quo.

With hindsight, we find it peculiar that such a basic desideratum, arguably a basic civic right, could have been bypassed for so long by Social Choice theory. Indeed, Social Choice theory considers also situations which are unrelated to civic rights,
such as, e.g., helping a group of people to select a restaurant for dinner, or select which pizza to order. Our view is that even in such situations, a default or opt-out alternative must be provided, e.g. stay home, when selecting from a closed list of restaurants, or stay hungry, when selecting from a closed list of dishes.  Not offering such a default must be an indication that those who set up the vote have decided, possibly non-democratically, that opting out is not an option.

\myheading{Always-relevant --}
as the ranking of hypothetical social states critically depends on Reality:
Real voters do not rank hypothetical social states in the abstract; they live in a particular Reality, thus when ranking alternatives to it they in fact rank transitions from Reality to the hypothetical social states depicted by the alternatives. 

As stated by Buchanan: 
\begin{displayquote}
``\emph{... the choice is never carte blanche. The choice among alternative structures [...] is between what is and what might be.}''~(\cite{buchanan1975limits} \emph{ibid}) 
\end{displayquote}

Rational evaluation of such a transition from Reality requires comparing the utility of the hypothetical social state to that of the present Reality, and estimating the cost of realizing the hypothetical social state given Reality.  

\myheading{Evolving --}
as the present social state---Reality---changes over time, and could change even when the set of alternatives remains intact. E.g., democratic communities endeavor that such changes be commensurate with the will of their people. This view can be the basis for \emph{democratic action plans} (Section~\ref{section:action plans}),
which start with the present Reality and proceed only in transitions that have sufficient support.


\begin{example}
As a concrete example of the role of Reality in social choice, consider choosing a location for a public building such as a hospital.  We first consider the scenario of moving an existing hospital to a new location.  Clearly, if only one location is proposed, then the choice is between moving the hospital and retaining it in its existing location.  Offering more than one location should not be used as a pretense to ``veto the status quo'' (cf. Kerr), hence, according to Reality-Aware Social Choice, the option to retain the status quo should be present even if multiple new locations are offered for a vote.
Furthermore, voter preferences might change depending on the present location of the hospital.

Now consider the scenario of electing a location for a new hospital. If only one location is offered for a vote, then clearly the vote would be between this location and no location.  Hence, similarly to the scenario of relocating a hospital, ``no location'', or ``protest'' (see discussion of Protest candidate below) should still be one of the alternatives offered when more than one location is put for a vote.  
\end{example}

\begin{example}
One scenario to consider is novel legislation in an area where there is none, with the extreme example being the writing of a constitution when a political system abandons a
dictatorial regime and attempts to become a democratic one.  
If the body that proposes this new legislation aims to obtain democratic support for it, then the alternative against which the proposal is voted upon is the present reality, in which there is no legislation or no constitution.  Even if several alternatives are presented to voters to choose from, our arguments above dictate that the null alternative, of no legislation or no constitution, must also be present.  Should the null alternative be chosen, the proposers can go back to the drafting board and come back with better proposals that may win democratic support.
\end{example}

\subsection{Related Work}
Richelson~\cite{richelson1984social} proposes axioms requiring that a winning state shall win over the status quo by a majority;
discusses corresponding voting rules;
and considers repeated elections where the status quo might change.
He also provides the following example demonstrating how ignoring the status quo might result in intuitively unaccepted choices.

\begin{example}[\cite{richelson1984social}]
Consider an election over $\{a, b, c, s\}$ with $s$ being the status quo,
and the following votes:
  $3$ votes with $a \pref s \pref b \pref c$,
  $4$ votes with $c \pref s \pref a \pref b$,
  and
  $2$ votes with $b \pref s \pref c \pref a$.
Then,
Plurality would choose $c$ as the winner, even though a majority prefers the status quo ($s$) over $c$.
\end{example}

Grofman~\cite{grofman1985neglected} considers the status quo in the context of spatial voting,
where people correspond to ideal positions on a line
and
vote by averaging their ideal position on that line with the position of the status quo;
thus, in particular, when Reality changes, their preferences change as well.

Gibbard et al.~\cite{gibbard1987arrow} and Yanovskaya~\cite{yanovskaya1991social} also consider \emph{distinguished} social states; specifically, they study social choice functions in which some predefined set of alternatives must be contained in any feasible set.
In their model Reality can be given as a distinguished alternative, however it does not change, but is merely present as an indistinguishable alternative.

We also mention work on the \emph{status quo effect}~\cite{bowen2014mandatory}.

Lastly,
we mention the Amendment agenda~\cite{moulin2016handbook},
employed by Anglo-American legislatures,
in which voting is carried out sequentially on each amendment given the status quo,
which is continuously updated by each approved amendment.

\subsection{Structure of the Paper}

Section~\ref{section:formal} formally describes the model of Reality-aware Social Choice. In doing so we discuss Reality-aware Social Choice Functions (Section~\ref{section:rscf}) as well as voter preferences (Section~\ref{section:preferences}) including certain domain restriction on such Reality-aware voter preferences (Section~\ref{section:domain restrictions}).

Section~\ref{section:worlds} discusses the underlying semantics of possible worlds  together with their related game-theoretic aspects of democratic action plans (Section~\ref{section:action plans}). Section~\ref{section:condorcet}  discusses Reality-aware Condorcet criteria including a description of an efficient Reality-aware Condorcet-consistent agenda (Section~\ref{section:agenda}).
Section~\ref{section:discussion}, end the paper with a discussion of applications of Reality-Aware Social Choice (Section~\ref{section:applications}) and further research directions stemming from this work.

We believe the main contribution of the paper is the introduction of a formal model of Reality-aware social choice, which consists of an adaptation of social choice functions and a novel definition of Reality-aware voter preferences. The formal model is accompanied with a semantics of possible worlds and a formal game which is relevant for a particular type of domain restriction of voter preferences. As we demonstrate and discuss in Section~\ref{section:applications}, several applications emerge from the model of Reality-Aware Social Choice described here.

\section{Formal Model of Reality-Aware Social Choice}\label{section:formal}

We first discuss our main definition,
that of Reality-Aware Social Choice Functions (RASCFs).
Later in this section we discuss Reality-aware voter preferences together with certain domain restrictions on these types of preferences.

\subsection{Reality-Aware Social Choice Functions (RASCFs)}\label{section:rscf}

In the standard model of social choice, the pivotal definition of a social choice function (SCF) is as follows.

\begin{definition}[SCF]
Let $A$ be a set of alternatives and let $L(A)$ be the set containing all linear orders on $A$.
Then, a social choice function (SCF) is a function $f : L(A)^n \to A$.
\end{definition}

That is, an SCF takes $n$ votes, each being a linear order over the same set of alternatives, and returns one alternative as the winner of the election.
Here, we extend this definition and define Reality-aware social choice function (RASCF), to take not only $n$ votes but also a Reality as an input, as defined next.

\begin{definition}[RASCF]
Let $A$ be a set of alternatives and let $L(A)$ be the set containing all linear orders on $A$.
Then, a \emph{Reality-aware social choice function} (RASCF) is a function $f : L(A)^n \times A \to A$.
\end{definition}

\begin{remark}
It is possible to similarly define a Reality-aware Social Welfare Functions.
For focus considerations, we however do not consider such functions in this paper.
\end{remark}

\subsection{Reality-aware Voter Preferences}\label{section:preferences}

The definition given above of Reality-aware social choice functions is our basic definition.
Notice, however, that it considers a specific Reality one at a time.
Below we are concerned with voter preferences as they relate to a changing Reality.
That is, we assume that voters have possibly different preferences for different realities, and we are concerned with how their preferences change as the reality changes.
Our basic definition of voter preferences in our model is as follows.

\begin{definition}[Reality-aware voter preferences]\label{definition:voterpreferences}
Given a set of alternatives $A$, the Reality-aware voter preferences of voter $v$ are defined via a function $f_v$ such that $f_v : A \to L(A)$.
\end{definition}

That is, for each possible Reality $R$, $f_v(R)$ is the linear order representing the preferences of voter $v$ if and when the Reality is $R$. Notice the difference with the standard model of Social Choice,
in which voter preferences are given as a single linear order;
in our model, a voter is associated not only with one linear order,
but with several, one for each possible Reality.

\begin{remark}
Indeed,
here we consider the ordinal model of elections,
where voters provide linear orders.
Adapting other elicitation methods into our model is of course possible but we do not discuss such adaptations here.
\end{remark}

\begin{remark}
To the best of our knowledge, there is no natural way to reduce our model of Reality-aware Social Choice to the standard setting of social choice. This is especially apparent in view of Definition~\ref{definition:voterpreferences} of Reality-aware voter preferences. Indeed, there does not seem to be a natural reduction of this type of preferences to standard ordinal preferences.
\end{remark}

\subsection{Domain Restrictions}\label{section:domain restrictions}

While it might make sense to consider arbitrary Reality-aware voter preferences,
next we consider several domain restrictions,
by restricting those functions which define the voter linear order for each possible Reality.
Indeed, formally we view the following as domain restrictions,
as they restrict the allowed preferences of the voters.
We also mention that some of these restrictions give rise to certain models by which voter preferences can be constructed
(similarly to, say, single-peaked preferences in the standard model of Social Choice).
Other domain restrictions can be defined, of course. Here we aim to delve into the structure of Reality-aware voter preferences to gain further understanding of them and the demonstrate the richness of the Reality-aware Social Choice model. A certain usage of one of the domain restrictions we consider below is discussed in Section~\ref{section:action plans}.

\subsubsection{Abstract Constancy Preferences}

We begin with the least domain restriction, which requires the following kind of constancy: If one craves for a particular state of affairs independently of Reality, then once this wish is fulfilled and this state becomes Reality, the craving should be satisfied.

For a Reality $R$,
an alternative $s$,
and a voter $v$,
we denote the position of alternative $s$ in the linear order $f_v(R)$ by $pos_{f_v(R)}(s)$.
For example, if $v$ prefers $s$ the most when the Reality is $R$,
then $pos_{f_v(R)}(s) = 1$.

\begin{definition}[Abstract Constancy Preferences]\label{constancy}
We say that the preferences of voter $v \in V$ satisfy \emph{abstract constancy} if the following hold for each social state $s\in S$: If $pos_{f_v(R)}(s) = 1$ for each $R \in S \setminus \{s\}$,
  then $pos_{f_v(s)}(s) = 1$.
  
  A Reality-aware election satisfies abstract constancy if the preferences of each voter $v \in V$ satisfy abstract constancy.
\end{definition}
In other words, abstract constancy requires that if a state $s$ is ranked first in $f_v(R)$ for any $R \in S \setminus \{s\}$, then $s$ is also ranked first in $f_v{s}$.

\begin{remark}
Notice that, if there is only one possible Reality,
then the above restriction holds (in the empty sense, since $S \setminus \{s\} = \emptyset$).
In this sense, voter preferences in the standard model of Social Choice satisfy abstract constancy.
Furthermore,
even if there are several possible Realities,
then even though $S \setminus \{s\} = \emptyset$,
simply letting $f_v(s) = f_v(s')$ for each voter $v \in V$ and each pair of social states $s$, $s'$
gives a Reality-aware election that satisfies abstract constancy.
Again,
it follows that voter preferences in the standard model of Social Choice satisfy abstract constancy.
\end{remark}

Observe that the definition above is indeed a domain restriction,
as (similarly to, say, single-peakedness in standard Social Choice)
it restricts the (Reality-aware) voter preferences.
We view this restriction as a sanity check,
namely as a very weak restriction which usually shall be satisfied.

Next we consider further, more restricted domain restrictions.

\subsubsection{Distance Constancy Preferences}

Below we consider a further restriction on the domain of Reality-aware voter preferences, based on the social cost of transitioning from Reality to hypothetical social states.
In addition to voters~$V$, social states $S$ with changing Reality $R$, and rankings $v_R$, we assume that there is a \emph{distance} $d$ over $S$. Formally, it is a pseudoquasimetric $d : S \times S \to \mathbb{N}$,
satisfying
(1) $d(s, s') \geq 0$ for all $s, s' \in S$;
(2) $d(s, s) = 0$ for all $s \in S$;
and
(3) $d(s, s'') \leq d(s, s') + d(s', s'')$ for all $s, s', s'' \in S$.  
For two social states $s$, $s'$ the value $d(s, s')$ can be viewed as the distance between $s$ and $s'$ in some abstract metric space, or as the \emph{social cost} of transitioning from $s$ to $s'$.
Below we consider Reality-aware voter preferences which are generated from such a model; and we require that a voter preferring a particular state from afar should prefer it at least as much as it gets near.

\begin{definition}[Distance Constancy Preferences]
Given a pseudoquasimetric $d : S \times S \to \mathbb{R}$, we say that the Reality-aware preferences of a voter $v \in V$ satisfy \emph{distance constancy} with respect to $d$ if the following hold for each triplet of social states $R, a, b \in S$: If $pos_{f_v(R)}(a) < pos_{f_v(R)}(b)$,
  then for each $R' \in S$ for which $d(R', a) \leq d(R, a)$ and $d(R', b) \geq d(R, b)$,
  it holds that $pos_{f_v(R')}(a) < pos_{f_v(R')}(b)$.
  
We say that a Reality-aware election satisfies \emph{distance constancy} if there is a pseudoquasimetric $d$ for which every voter $v \in V$ satisfies distance constancy.
\end{definition}
In other words, distance constancy requires that if $a$ is ranked above $b$ in $f_v(R)$,  then $a$ is ranked above $b$ in $f_v(R')$ for every $R' \in S$ for which $d(R', a) \leq d(R, a)$ and $d(R', b) \geq d(R, b)$.

Next we show that this domain restriction is at least as restrictive as the previous domain restriction.

\begin{observation}\label{observation:constancylemmaone}
 If voter preferences satisfy  distance constancy then they satisfy  abstract constancy.
\end{observation}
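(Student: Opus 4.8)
The plan is to prove the contrapositive-style implication directly from the definitions by instantiating the distance-constancy condition at a carefully chosen pair of realities. Suppose voter $v$ satisfies distance constancy with respect to some pseudoquasimetric $d$, and fix a social state $s \in S$ such that $pos_{f_v(R)}(s) = 1$ for every $R \in S \setminus \{s\}$. We must show $pos_{f_v(s)}(s) = 1$, i.e.\ that $s$ beats every other state $b \neq s$ in the ranking $f_v(s)$.

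First I would fix an arbitrary $b \in S \setminus \{s\}$ and look for a reality $R$ from which to transfer the comparison ``$s$ above $b$'' to the reality $s$. Take $R = b$ itself: since $b \in S \setminus \{s\}$, the hypothesis gives $pos_{f_v(b)}(s) = 1 < pos_{f_v(b)}(b)$, so in the triplet $(R, a, b') := (b, s, b)$ the premise $pos_{f_v(R)}(a) < pos_{f_v(R)}(b')$ of distance constancy is satisfied. Now I apply distance constancy with the target reality $R' := s$. The two inequalities I need to verify are $d(R', a) \le d(R, a)$, i.e.\ $d(s, s) \le d(b, s)$, and $d(R', b') \ge d(R, b')$, i.e.\ $d(s, b) \ge d(b, b)$. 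Both hold immediately: $d(s,s) = 0 \le d(b,s)$ by nonnegativity (property~(1) of the pseudoquasimetric), and $d(b,b) = 0 \le d(s,b)$ by properties~(2) and~(1). Hence distance constancy yields $pos_{f_v(s)}(s) < pos_{f_v(s)}(b)$.

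Since $b \in S \setminus \{s\}$ was arbitrary, $s$ is ranked above every other alternative in $f_v(s)$, which is exactly $pos_{f_v(s)}(s) = 1$. As this holds for every such $s$, voter $v$ satisfies abstract constancy; and since a Reality-aware election satisfying distance constancy has every voter satisfying distance constancy with respect to a common $d$, every voter then satisfies abstract constancy, so the election satisfies abstract constancy.

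There is essentially no hard step here — the argument is a one-shot instantiation. The only thing to be careful about is the choice of the source reality $R$: one might be tempted to pick $R$ generically, but the clean choice $R = b$ makes the distance inequalities collapse to trivialities about $d(\cdot,\cdot)$ being zero on the diagonal and nonnegative, using only the first two axioms of the pseudoquasimetric and not the triangle inequality at all. (One minor bookkeeping point: the two definitions quote $d$ with codomain $\mathbb{N}$ in one place and $\mathbb{R}$ in another; this discrepancy is irrelevant to the argument, which only uses nonnegativity and $d(s,s)=0$.)
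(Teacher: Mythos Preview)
Your proof is correct and follows essentially the same approach as the paper: instantiate distance constancy with $a = s$, $R' = s$, and use $d(s,s)=0$ together with nonnegativity to discharge the distance hypotheses. In fact your version is tighter than the paper's, which picks an arbitrary $s' \neq s$ as the source reality and only explicitly checks the inequality $d(s',s) \ge d(s,s)$; your choice $R = b$ makes \emph{both} required inequalities collapse to trivialities, so your write-up actually closes a small gap left implicit in the paper's argument.
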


\begin{proof}
By way of contradiction, assume there is a voter $v \ in v$ with preferences that satisfy distance constancy for some $d$ but not satisfy abstract constancy.
Then there is a state $s$ for which $s$ is ranked first in $f_v(R)$ for each $R \neq s$ but is not ranked first in $f_v(S)$.
Consider some $s' \neq s$ and notice that $d(s', s) \geq d(s, s)$, thus the fact that $s$ is ranked higher than all other states in $v_s'$ but there is some state which is ranked higher than $s$ in $v_s$ violates distance constancy. A contradiction.
\end{proof}

\subsubsection{Utility Constancy Preferences}

We consider a further restriction on the domain . We assume not only a distance $d$ between the social states as above, but also assume that each voter has a \emph{state utility}, denoted by $U_v(S)$, and require that the distance $d$ of these states from Reality, combined with their utilities, completely define the Reality-aware preferences of the voters.

To gain some intuition for this domain restriction, as well as a specific application, consider selecting a budget~\cite{budgetingone}.
In this case it is reasonable to assume an objective distance between budget bundles
(e.g., the total amount of funds in the symmetric difference between two budget bundles might define such an objective distance).
Then, combined with a utility of a voter to each of the possible budget bundles, the preferences of each voter is determined.
So the distance $d$ to be thought of an objective distance between the states,
while the state utility values are individual for each voter,
and reflect each voter's utility for each state~$s$, which is oblivious to Reality.
(Such an objective distance between social states might not be reasonable for, e.g., electing a winner among a set of candidates.)

Formally, 
we first define a \emph{transition utility},
which we then use to define the utility constancy domain restriction.

\begin{definition}
Given a pseudoquasimetric $d : S \times S \to \mathbb{R}$
and a state utility function $U_v : S \to \mathbb{R}$, the \emph{transition utility} $T_v$ of voter $v$ from state $s$ to $s'$ is defined as follows:
  $T_v(s, s') = U_v(s') - U_v(s) - d(s, s')$. 
\end{definition}

That is, the utility of a certain voter for transitioning from state $s$ to state $s'$
is defined as the difference in the utility of the voter from state $s'$ minus its utility for $s$,
minus the distance to transition from state $s$ to state $s'$.
Intuitively,
a voter would like to move from $s$ to $s'$ if the gain in utility is significant enough
to compensate for the cost of transitioning from $s$ to $s'$.

Then, the preferences of voter $v$ satisfy utility constancy if for each Reality $R$ and two states $s$, $s'$,
it holds that voter $v$ prefers $s$ over $s'$ if the transition utility from $R$ to $s$ is higher than the transition utility from $R$ to $s'$.  That is, the ranking of a voter $v$ with respect to some Reality $R \in S$ is determined by ordering the social states $s \in S$ in decreasing order of their transition utilities from Reality, namely $T_v(R,s)$.
The intuition is that transitioning to a farther social state is more costly and hence less desirable.
A formal definition follows.

\begin{definition}[Utility Constancy Preferences]
Given a distance $d$ and for each voter $v \in V$ a state utility $U_v$, the preferences of a voter $v \in V$ satisfies \emph{utility constancy} with respect to $d$ and $U_v$, if for each Reality $R$ and two states $s$, $s'$, it holds that $pos_{f_v(R)}(s) < pos_{f_v(R)}(s')$ if and only if $T_v(R, s) > T_v(R, s')$.

We say that a Reality-aware election satisfies \emph{utility constancy} if there is a distance $d$ and for each voter $v \in V$ a utility function $U_v$ such that the Reality-aware preferences of every $v \in V$ satisfy utility constancy with respect to $d$ and~$U_v$.
\end{definition}

Next we show that this domain restriction is as restricted as the previous ones.

\begin{observation}
 If voter preferences satisfy utility constancy then they satisfy distance constancy and abstract constancy.
\end{observation}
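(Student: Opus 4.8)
The plan is to prove the two implications separately, leveraging Observation~\ref{observation:constancylemmaone} to handle abstract constancy essentially for free. Concretely, once we show that utility constancy implies distance constancy (with respect to the \emph{same} pseudoquasimetric $d$), abstract constancy follows immediately by chaining with Observation~\ref{observation:constancylemmaone}. So the real content is the first implication.

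For that implication, I would fix a voter $v$ whose preferences satisfy utility constancy with respect to some distance $d$ and state utility $U_v$, and argue $v$ satisfies distance constancy with respect to this very $d$. Take any triplet $R, a, b \in S$ with $pos_{f_v(R)}(a) < pos_{f_v(R)}(b)$. By the definition of utility constancy this is equivalent to $T_v(R,a) > T_v(R,b)$, which after substituting $T_v(R,s) = U_v(s) - U_v(R) - d(R,s)$ and cancelling the common $U_v(R)$ term reduces to the Reality-independent inequality $U_v(a) - d(R,a) > U_v(b) - d(R,b)$. Now for any $R'$ with $d(R',a) \le d(R,a)$ and $d(R',b) \ge d(R,b)$, a short monotonicity chain gives $U_v(a) - d(R',a) \ge U_v(a) - d(R,a) > U_v(b) - d(R,b) \ge U_v(b) - d(R',b)$; adding $-U_v(R')$ to both ends recovers $T_v(R',a) > T_v(R',b)$, and applying utility constancy in the other direction yields $pos_{f_v(R')}(a) < pos_{f_v(R')}(b)$, which is exactly what distance constancy demands.

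Finally, since the election-level statement quantifies existentially over the distance/utility data, I would note that the same $d$ witnessing utility constancy for all voters also witnesses distance constancy for all voters, so a Reality-aware election satisfying utility constancy satisfies distance constancy, and then abstract constancy by Observation~\ref{observation:constancylemmaone}.

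I do not anticipate a genuine obstacle here; the argument is just a cancellation plus a two-step inequality chain. The only point requiring a word of care is that the reduction must use the \emph{same} pseudoquasimetric throughout (so that the hypotheses $d(R',a)\le d(R,a)$, $d(R',b)\ge d(R,b)$ in the definition of distance constancy refer to the right $d$), and, if one wishes to be pedantic, a remark reconciling the codomain of $d$ (stated as $\mathbb{N}$ in one place and $\mathbb{R}$ elsewhere) — but this plays no role in the inequalities.
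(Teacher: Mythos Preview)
Your proposal is correct and follows essentially the same argument as the paper: both reduce to the cancellation $T_v(R,a) > T_v(R,b) \iff U_v(a)-d(R,a) > U_v(b)-d(R,b)$ and then use the monotonicity of this expression in $d(R,\cdot)$. The only cosmetic differences are that you argue directly while the paper argues by contradiction, and you explicitly invoke Observation~\ref{observation:constancylemmaone} to dispatch abstract constancy (which the paper's proof leaves implicit); your care to note that the \emph{same} $d$ witnesses both notions is also a detail the paper glosses over.
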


\begin{proof}
By way of contradiction, assume there is a voter $v \in v$ with preferences that satisfy utility constancy for some distance $d$ and utility $U_v$, but not distance constancy for $d$.
Then there are Realities $R$, $R'$ and states $a$, $b$ for which $d(R', a) \leq d(R, a)$, $d(R', b) \geq d(R, b)$, $pos_{f_v(R)}(a) < pos_{f_v(R)}(b)$,   but $pos_{f_v(R')}(a) > pos_{f_v(R')}(b)$. 

Since $v$ prefers $a$ over $b$ for Reality $R$, it follows that $T_v(R, a) < T_v(R, b)$. Thus, $T_v(R, a) = U_v(a) - U_v(R) - d(R, a) < U_v(b) - U_v(R) - d(R, b) = T_v(R, b)$. It follows that $U_v(a) - d(R, a) < U_v(b) - d(R, b)$. 
Similarly, since $v$ prefers $b$ over $a$ for Reality $R'$, it follows that $T_v(R', a) > T_v(R', b)$ and thus $U_v(a) - d(R', a) > U_v(b) - d(R', b)$. Plugging in the facts that $d(R', a) \leq d(R, a)$ and $d(R', b) \geq d(R, b)$ we reach a contradiction. 
\end{proof}

\section{Possible Worlds and Democratic Action Plans}\label{section:worlds}

In this section we discuss semantics for Reality-Aware Social Choice. Specifically,
in Reality-Aware Social Choice, each social state $s \in S$ can be viewed as a \emph{possible world}
(see, e.g.,~\cite{chellas1980modal}), where Reality $R \in S$ is the state corresponding to a possible world that is the actual world, namely the world in which the vote on $S$ takes place. If a social state $s \neq R$ wins, this means that society aims to change Reality from $R$ to $s$.  If successful, then the subsequent vote may take place in the possible world (i.e., with Reality being) $s$.  
Indeed, in this view, our Reality-Aware Social Choice Functions are functions which take voter preferences with respect to the possible world which we are at now, and elects the possible world to transition to.

This view of possible worlds and the possible transitions among them gives rise to \emph{democratic action plans}, which are (possibly infinite) sequences of states where
(i) the first state is the present Reality;
and
(ii) when Reality equals a state in the sequence, then the following state wins the vote (or ties) in that state.
A democratic action plan need not be executed blindly to completion,
but should be evaluated at each step, accommodating gaps between estimated and actual efforts of transitioning from one state to another, changes of heart, and recent real-world, external events.
Still, a democratic action plan may provide a long term vision and blueprint for a democratic community that is useful and reassuring,
while being consistent with the immediate action to be taken at any point in time.

A democratic action plan offers an incremental path from the present Reality with the property that each incremental choice along the way has democratic support.
Some state transitions may not be feasible.
If a transition from $s$ to $s'$ is infeasible or, using possible-worlds terminology, $s'$ is not accessible from $s$, then $s'$ should not be offered as an alternative when voting in the possible world~$s$.

Computing a democratic action plan requires eliciting from each voter the ranking of all states with Reality being each state in the plan, except the last.  As this may be difficult, one might consider unfolding democratic action plans, in which the society computes only a few steps ahead.
This seems reasonable as a society can anyhow peer so much into the future.  In particular, it may not be plausible to assume that voters can predict what their preferences will be when the new Reality is far away from the present one.

A democratic action plan requires a voting rule that determines when one state is preferred over another,
given the Reality.
We discuss Reality-aware Condorcet-consistent voting rules below, as well as an efficient Reality-aware Condorcet-consistent agenda.

Notice that a democratic action plan,
if unbounded,
might not converge,
as the following example demonstrates.

\begin{example}\label{example:infinite democratic action plan}
Consider three social states: $a$, $b$, and $c$,
and a single voter $v$.
Assume that the ranking of $v$ when Reality is $a$ is $v_a = b \pref c \pref a$; her ranking when Reality is $b$ is $v_b = c \pref a \pref b$; and her ranking when Reality is $c$ is $v_c = a \pref b \pref c$.
Then,
even though our Constancy axiom (i.e., Definition~\ref{constancy}) (vacuously) holds,
we nevertheless have a cyclic,
thus infinite democratic action plan:
  To see this,
  assume that $a$ is the initial Reality.
  Then, the society would move to $b$,
  then to $c$,
  and back to $a$.
\end{example}

\subsection{Game-Theoretic Consequences}\label{section:action plans}

Here we discuss game-theoretic consequences of our model. Specifically, Reality-aware Social Choice, supplied with such democratic action plans as discussed above,
allows for a more realistic study of strategic behavior of voters.
For concreteness, consider a utility constancy election in which the objective distance~$d$ and the individual state utilities $U_v$ are known.
This suggests an iterative game, pictorially played on the following graph
(this game might be thought of a strategic game on top of a democratic action plan over
the model of Utility-based Reality-aware Social Choice):
  The \emph{game graph} $G$ has a vertex $a$ for each social state $a \in S$ and is a complete arc-weighted directed graph,
where the weight of the arc $(a, b)$ equals the distance from $a$ to $b$, according to $d$
(i.e., $w(a, b) = d(a, b)$ for each $a, b \in S$).
In each turn,
the current Reality is represented by a pebble placed on one vertex,
and each voter is a player.
Assume,
for simplicity,
that the players know the metric $d$ and the state utilities of all players.
The game is played repeatedly,
where in each turn all players specify their strategic rankings which might not correspond to their real transition utilities,
as the pay-off of each player equals the state utility of that player from the Reality at the end of the game\footnote{%
  In fact, a more careful formulation of the game might be that it is played until convergence;
  or, that the players' pay-off is averaged over the course of the game.}
(so behaving strategically may, e.g., help shift society to a social state this player prefers less,
in the hope of more easily being able to shift the society from that state to a more preferable state).


\section{Reality-Aware Condorcet-Consistency}\label{section:condorcet}

Next we incorporate Reality into two classical concepts -- the $18^{th}$-century Criterion of the Marquis de Condorcet and the $12^{th}$-century Amendment Agenda of Ramon Lull.  First, we discuss how incorporating Reality into Condorcet criteria can cope with Condorcet cycles in a natural way, resulting in simple one-sentence voting rules that can be easily explained to voters.  Then, we show how incorporating Reality into the Amendment Agenda can produce a simple and efficient Reality-aware Condorcet consistent voting process, that can be realized simply by show of hands and computer-less note taking.

\subsection{Reality-Aware Axiomatic Properties}\label{section:reality shock condorcet}

Here we consider several axiomatic properties of voting rules that use Reality.
As our voting rules are defined via Reality-aware Social Choice Functions,
we use the term Reality-aware voting rule to formally mean a Reality-aware Social Choice Function. In this section we focus on axiomatic properties related to the Condorcet principle.

The first use of Reality should be to eliminate all alternatives not preferred over it -- why bother with an alternative that the majority considers inferior to the status quo? 

\begin{definition}[Reality-viable Alternatives]\label{definition:reality-viable}
Let $S$ be a set of alternatives and $R \in S$ the Reality.
An alternative $s \in S$ is \emph{Reality-viable} if $s$ beats $R$ (i.e., if more voters prefer $s$ over $R$ than vice versa).  The set of Reality-viable alternatives is denoted by $S_R$.
\end{definition}

Correspondingly, what is perhaps the simplest axiomatic property desired from a voting rule that uses Reality is that it shall elect a Reality-viable alternative, if there is one, else elect Reality.

\begin{definition}[Reality-viable Criterion]\label{definition:reality viable}
Let $S$ be a set of alternatives with $R \in S$ the Reality. A Reality-aware voting rule satisfies Reality-viable criterion if it elects $R$ whenever $S_R = \emptyset$, and else elects a member of $S_R$.
\end{definition}

In the standard model of Social Choice, the \emph{Condorcet criterion} states that an alternative preferred to all others by a voter majority shall be elected as a winner, called the Condorcet winner. A winner might not exist due to \emph{Condorcet cycles}.
The criterion treats all social states equally, and hence any method for breaking cycles among them may seem arbitrary.
However,  once Reality is recognized as a distinguished social state, it may be used to eschew or break cycles and in either case identify a winner. Several approaches of increasing complexity can take Reality to the task of addressing Condorcet cycles: 
\begin{enumerate}
    \item Elect Reality, that is, retain the status quo: If voters cannot clearly decide what alternative they prefer over Reality, as manifested by a Condorcet cycle among the alternatives preferred over Reality, then retaining Reality is a safe, albeit conservative, bet.
    
    \item Being more permissive, elect an arbitrary winner among all alternatives preferred over Reality.
    
    \item Being less arbitrary, elect the alternative most preferred over Reality.
    
    \item Employ a distance over social states, if available (e.g., if voter preferences satisfy distance constancy and the underlying distance is known), to elect the alternative closest to Reality among those preferred over Reality.
    
    \item Employ one's favorite tournament solution~\cite{moulin1986choosing} to elect a member of the solution to the tournament among the alternatives preferred over Reality.
\end{enumerate}

All these approaches are rephrased below as Reality-aware Condorcet criteria.  They all generalize the classical, Reality-oblivious, Condorcet criterion.
The first one is the following, most conservative criterion, which translates directly into a Reality-aware voting rule.


\begin{definition}[Conservative Reality-aware Condorcet Criterion]\label{definition:Conservative Reality-aware Condorcet Criterion}
Let $S$ be a set of alternatives with $R \in S$ being the Reality. A Reality-aware voting rule satisfies the \emph{Conservative Reality-aware Condorcet criterion} if it elects the Condorcet winner of $S_R$ whenever it exists, and elects $R$ otherwise.
\end{definition}

Next is a less conservative option, which, in the absence of a Condorcet winner among the Reality-viable alternatives, suggests electing some alternative preferred over the status quo.

\begin{definition}[Permissive Reality-aware Condorcet Criterion]\label{definition:Permissive Reality-aware Condorcet Criterion}
Let $S$ be a set of alternatives with $R \in S$ the Reality.  
A Reality-aware voting rule satisfies the \emph{Permissive Reality-aware Condorcet criterion} if it elects the Condorcet winner of $S_R$ if it exists; elects some member of $S_R$ whenever $S_R \neq \emptyset$ and there is no Condorcet winner among $S_R$; and elects $R$ if $S_R = \emptyset$.
\end{definition}

Observe that corresponding \emph{Conservative/Permissive Reality-aware Condorcet voting rules} can be defined directly to adhere to these Conservative/Permissive Reality-aware Condorcet criteria. These Reality-aware Condorcet-consistent voting rules are sufficiently simple to be stated in one sentence and to be computed manually.  For concreteness, the first one is: \emph{Among the alternatives preferred over the status quo, if one is preferred over all others then elect it. Else retain the status quo}.
Indeed, such simplicity means that these rules are easily communicable and therefore might be considered more trustworthy by the voters.

The next criterion suggests to resolve cycles based on the net preference of its members over Reality. 

\begin{definition}[Preference-over-Reality Condorcet Criterion]\label{definition:condorcet preference}
Let $S$ be a set of alternatives with $R \in S$ the Reality.  A Reality-aware voting rule satisfies the \emph{Preference-over-Reality Condorcet criterion} if it elects $R$ whenever $R_S = \emptyset$; elects the Condorcet winner of $R_S$ whenever it exists; and elects a member of $W = \{\argmax_{s \in S_R} N_R(s)\}$ otherwise, where $N_R(s)$ denotes the number of voters preferring $s$ to $R$ minus the number of voters preferring $R$ to $s$.
\end{definition}

The above criterion is not only logically sound but has a psychological appeal, as we discuss next. All alternatives, except Reality, relate to hypothetical social states.  
A comparison of a hypothetical state to Reality is easier mentally and is more trustworthy than a comparison among two hypothetical states, as the latter  requires more hypothesizing and imagining. In the absence of a Reality-viable Condorcet winner, our criterion elects a winner by employing only comparisons with Reality, hence we argue that it is psychologically more sound than criteria that rely on comparisons among hypothetical states.
Concretely, observe that a cycle of hypothetical social states is even less trustworthy. To illustrate this point, assume that a voter's judgment of two hypothetical social states is faulty with probability $p$.  Then, the probability that there are no voter judgment faults in a cycle containing, say, $z$ social states, is $(1 - p)^z$, which diminishes exponentially in $z$.

\medskip

Last, we mention that the above Reality-aware Condorcet criteria can be refined to consider any tournament solution
(e.g., top cycle, uncovered set, etc.). For example, it is possible to consider a Reality-aware Condorcet criterion that requires to select an element of the top cycle of $S_R$ whenever $S_R \neq \emptyset$, and $R$ otherwise. For more background on tournament solutions, we refer the reader to Moulin~\cite{moulin1986choosing}.

%

We do not feel compelled to explore complications of the criteria arising from selecting a winner from a tournament solution at this time, especially since this would entail further comparisons among hypothetical states, which are both laborious to the voters (compared to the Reality-aware Amendment Agenda, see below), less trustworthy, and with what may be a marginal benefit.
In particular, it is possible that all members of a top-cycle are equally preferred over Reality. In such an extreme case, we feel that it is quite justified to resolve the tie among them arbitrarily, as indeed the voters will be equally happy with a transition from Reality to any of them.
 
In case the distance metric is generally known  (e.g., in a specific election satisfying Utility constancy; see Section~\ref{section:domain restrictions}),
the distance from Reality can also be used to break Condorcet cycles, as follows.
  
\begin{definition}[Distance-from-Reality Condorcet Criterion]\label{definition:condorcet two}
Let $S$ be a set of alternatives with $R \in S$ the Reality.  A Reality-aware voting rule satisfies the \emph{Distance-from-Reality Condorcet criterion} if it elects $R$ whenever $R_S = \emptyset$; elects the Condorcet winner of $R_S$ whenever it exists; and elects a member of $W = \{\argmin_{s \in C} d(s, R)\}$ otherwise.
\end{definition}

Observe that the Distance-from-Reality Condorcet criterion defined above does not rely on subjective voter judgments to break cycles, but only on the (presumably) objective distance measure, if available.

\medskip

\begin{remark}
As noted by Arrow~\cite[Page 95]{arrowbooksecond}, the special default role Reality may play in voting on issues does have a natural counterpart when voting on candidates.  To rectify for the lack of a default alternative, Dodgson~\cite{dodgson1873discussion} suggested to add a fictitious protest, or \emph{scarecrow} candidate, to allow voters to protest against an inadequate list of candidates.   Clearly, the winner of an election must at least win over a brainless scarecrow; if not, the elections are annulled. Observe how this view corresponds to our Condorcet criteria defined above, by referring  to a set of candidates containing a default scarecrow candidate, instead of a set of social states that include Reality. 
\end{remark}

\begin{remark}
We note on the relation of our model, and specifically our Condorcet criteria, to Arrow's theorem. Indeed, Gibbard et al.~\cite{gibbard1987arrow} have shown that Arrow's theorem does not hold in a model very much similar to ours:
  In their model there is a distinguished social state which is present in every election.
In particular, they considered the following voting rule, originally proposed by Richelson~\cite{richelson1984social}:
  It considers only the social states that are preferred over the distinguished social state (Reality-viable alternatives in our terminology) and  breaks ties by serial dictatorship.
Indeed, the distinguished social state is used to break Condorcet cycles in a principled way, in particular resulting in a voting rule that is not completely dictatorial, but nevertheless has visible dictatorial properties (as it employs serial dictatorship).

This result suggests considering the relation of voting rules that satisfy our Condorcet criteria to Arrow's theorem. First, observe that the rule considered by Richelson~\cite{richelson1984social} and Gibbard et al.~\cite{gibbard1987arrow} satisfies our Reality-viable criterion (Definition~\ref{definition:reality viable}).

As for our Conservative Reality-aware Condorcet criterion, observe that any voting rule that satisfies this criterion violates Unanimity (i.e., if all voters prefer some social state to another, then the social state shall be preferred to the other in the aggregated vote). To see this, consider a profile in which all voters rank the Reality last --- if there is no Condorcet winner Reality would still be elected.

As for our other Reality-aware Condorcet criterion, we mention that Gibbard et al.~\cite{gibbard1987arrow} have shown that a modification of Arrow's theorem, specifically where Non-Dictatorship is replaced with Anonymity (namely, the voting rule shall produce the same results, independently of reordering the voters), does hold in their model. Thus, while our other Reality-aware Condorcet criterion might subvert Arrow's theorem using similar dictatorial properties as the rule considered by Gibbard et al.~\cite{gibbard1987arrow}, in essence they are bound to either violate Anonymity, Unanimity, or Independent of Irrelevant Alternatives.
\end{remark}

\subsection{Reality-aware Amendment Agenda}\label{section:agenda}

The Amendment Agenda was proposed by Ramon Llull in his 1299
\emph{Des Artes Electionis}~\cite{mclean1990borda}:  Arrange all alternatives in some order, vote the first against the second, the winner of the two against the third, and so on; finally, elect the final winner.
The Amendment Agenda is known to elect a Condorcet winner if there is one, however it does not detect cycles.  

Here we use this Amendment Agenda as the core of our Reality-aware Condorcet-consistent voting processes, which corresponds naturally to our Reality-aware Condorcet criteria.  All processes consist of two stages: (i) Identify the Reality-viable alternatives and perform the Amendment Agenda on them; (ii) Vote the final winner against all tentative winners (except the previous one, against which it has already won).  If it wins them all, then elect it. 

The specific variants of our adaptation of the Amendment Agenda to our setting differ in how they proceed in cases in which there is no Condorcet winner among the Reality-viable alternatives.
E.g., the conservative process, which is formally described below, simply elects Reality in this case.  

\begin{algorithm}[Reality-aware Conservative Amendment Agenda]\label{algorithm:RAAA}
Let $S$ be the set of alternatives with $R \in S$ and let $S_R$ be the set of Reality-viable alternatives.  If $S_R = \emptyset$ then elect $R$.  Else perform an Amendment Agenda on $S_R$ starting with $R$, followed by extra votes of the final winner against all other members of $S_R$ not previously voted against, if any. If the final winner wins all these votes then elect it.
Else elect $R$.
\end{algorithm}

\begin{remark}\label{remark:agendas}
Amendment Agendas corresponding to the other Reality-aware Condorcet criteria described in the previous section can be obtained by revising the final ``Else".  For example, revise the final ``Else'' to ``Else arbitrarily elect a member of $S_R$.'' to obtain the Permissive Reality-aware Amendment Agenda; or revise it to ``Else elect from $S_R$ the alternative that wins the most over $R$, breaking ties arbitrarily.'' to obtain the Preference-over-Reality Amendment Agenda. Further complicating the Agenda can produce processes that correspond to other criteria, e.g. those 
based on tournament solutions.
\end{remark}

\begin{theorem}
  Algorithm~\ref{algorithm:RAAA} satisfies the Conservative Reality-aware Condorcet criterion. Furthermore, the modifications described in Remark~\ref{remark:agendas} correspond to algorithms which satisfy the further Reality-aware Condorcet criteria discussed in Section~\ref{section:condorcet}. 
\end{theorem}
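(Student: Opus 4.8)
The plan is to split on the three possibilities for $S_R$ --- empty, containing a Condorcet winner, or containing no Condorcet winner --- and to observe that the various ``Else'' clauses of Remark~\ref{remark:agendas} are reached only in the last case, so that the whole statement reduces to analyzing that case once.

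The first two cases are short. If $S_R = \emptyset$, Algorithm~\ref{algorithm:RAAA} immediately elects $R$, which is exactly what the Conservative criterion demands (and also the Permissive, Preference-over-Reality, and Distance-from-Reality criteria). If $S_R$ has a Condorcet winner $c$, I would first note that $c$ is in fact the Condorcet winner of $S_R \cup \{R\}$: it beats every other member of $S_R$ by assumption, and it beats $R$ because $c \in S_R$ is Reality-viable. By the classical property of the Amendment Agenda (it elects a Condorcet winner whenever one exists), running the Agenda on $S_R$ starting with $R$ yields final winner $c$; and in the second stage $c$ is voted against every remaining member of $S_R$ and wins all those votes, so $c$ is elected --- matching all the criteria, which agree that the Condorcet winner of $S_R$ must be elected whenever it exists.

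The crux is the case $S_R \neq \emptyset$ with no Condorcet winner. Here I would argue: (1) the Agenda's final winner $w$ lies in $S_R$, since $R$ is voted first against some member of $S_R$, loses (being beaten by every Reality-viable alternative), and never returns; hence from round one onward both the running champion and all challengers lie in $S_R$. (2) A final winner of an Amendment Agenda beats every alternative it was ever voted against, because it won every comparison from the round it entered as running champion onward. Therefore, if $w$ additionally won all second-stage votes against the members of $S_R$ it had not yet faced, then $w$ would beat every element of $S_R \setminus \{w\}$, i.e.\ $w$ would be a Condorcet winner of $S_R$, contradicting the case assumption. So $w$ loses at least one second-stage vote and control reaches the final ``Else''. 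For Algorithm~\ref{algorithm:RAAA} itself this elects $R$, as the Conservative criterion requires. For the Permissive rewrite the ``Else'' arbitrarily elects a member of $S_R$, matching Definition~\ref{definition:Permissive Reality-aware Condorcet Criterion}; for the Preference-over-Reality rewrite it elects an alternative of $S_R$ with maximal net preference over $R$, i.e.\ a member of $\{\argmax_{s \in S_R} N_R(s)\}$, matching Definition~\ref{definition:condorcet preference}; and the analogous rewrites of the ``Else'' to ``elect from $S_R$ the alternative closest to $R$ under $d$'' or ``elect a member of the chosen tournament solution of the tournament on $S_R$'' give the Distance-from-Reality criterion and the tournament-solution refinements respectively.

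I expect the only genuine subtlety --- and the step to state carefully --- to be the second-stage completeness argument in the third case: that combining ``$w$ wins every second-stage vote'' with ``$w$ won every Agenda comparison it participated in'' really does cover all of $S_R \setminus \{w\}$, so that the algorithm's failure to certify a Condorcet winner this way is equivalent to no Condorcet winner existing. Everything else is the classical correctness of the Amendment Agenda plus bookkeeping about which pairwise comparisons have already been performed.
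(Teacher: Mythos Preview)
Your proposal is correct and follows essentially the same case analysis as the paper's proof. You are somewhat more explicit than the paper --- separating out the trivial $S_R = \emptyset$ case, noting that the Condorcet winner of $S_R$ is also the Condorcet winner of $S_R \cup \{R\}$ (needed since the Agenda starts at $R$), and spelling out why the second-stage votes together with the Agenda votes exhaust $S_R \setminus \{w\}$ --- but the underlying argument is the same.
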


\begin{proof}
We prove the claim only for Algorithm~\ref{algorithm:RAAA}, as the proof is similar for the other agendas.
There are two cases to consider, one in which there is a Condorcet winner $c$ among $S_R$ and two in which there is not.  In the first case, assume by way of contradiction that that $c' \ne c$ is elected. But this would require $d$ to beat $c$ either during one of the  Amendment Agenda votes on $S_R$ performed as the first step of Algorithm~\ref{algorithm:RAAA}, if $c$ was tentatively elected before $d$, or in the second step of voting it against all other alternatives of $S_R$, final vote if it was not, contradicting $c$ being a Condorcet winner.  Thus, in this case Algorithm~\ref{algorithm:RAAA} will select $c$ as the winner,
as required by the Conservative Reality-aware Condorcet criterion.

In the second case, let $d$ be the alternative selected in the first step of Algorithm~\ref{algorithm:RAAA}, namely the winner of the Amendment Agenda on $S_R$.
As $d$ is in particular not a Condorcet winner of $S_R$, it will fail against one of the alternatives it is voted against in step two of Algorithm~\ref{algorithm:RAAA}, thus $R$ will be selected, as required by the Conservative Reality-aware Condorcet criterion.
\end{proof}

\begin{remark}
We wish to clarify the use of the term  Reality in the description of the  Reality-aware Amendment Agenda described in Algorithm~\ref{algorithm:RAAA}:
Here, Reality is best understood as a variable, with an initial value being the present reality, namely the status quo.  As the algorithm progress, the value of the variable changes.  The final value is the elected alternative, which is the new Reality.
\end{remark}

\section{Applications}\label{section:applications}

Since the introduction of key ideas presented here in the preliminary version of this work~\cite{preliminary}, several applications of Reality-aware Social Choice have been demonstrated. Here we briefly discuss them, and other, envisioned ones.



\subsection{Sybil-Resilient Voting}

In a majoritarian democracy, a single vote may decide the fate of elections or tilt a decision. Such a deciding vote could, in principle, belong to a fake voter, aka sybil. The risk of sybils infiltrating the electorate is even higher in online democratic communities, making sybil attacks literally an existential threat to e-democracies. 
Recently, Reality-Aware Social
Choice was leveraged to face sybil attacks~\cite{srsc}. Specifically, Shahaf et al.~\cite{srsc} use the status quo as the anchor of \emph{sybil resilience}, which is characterized by \emph{sybil safety} - the inability of sybils to change the status quo against the will of the genuine voters, and \emph{sybil liveness} – the ability of the genuine voters to change the status quo against the will of the sybils.
They investigate the supermajority needed to ensure sybil safety, as a function of the sybil penetration rate, and show that liveness can be assured up to one-third sybil penetration.

Notably, the use of Reality-aware Social Choice as the foundation of their sybil-resilient voting rules allows Shahaf et al. to bypass known lower bounds~\cite{conitzer2010using}.

\subsection{Holistic Voting Process}

When several alternatives to the status quo are offered, and the majority prefers the status quo over any of them, then, according to the Reality-aware Condorcet criteria discussed in Section~\ref{section:condorcet}, the status quo shall remain.
Indeed, it is possible that an alternative with majority preference over the status quo does not exist. But, it could also be the case that such an alternative exists but was not offered to vote upon, due to ignorance, incompetence or the special interests of those who set up the vote.

Such a situation is clearly distressing, as the will of the people to change the status quo cannot be accommodated. A recent paper~\cite{metricaggregation}, inspired by Reality-Aware Social Choice, describes an egalitarian process for identifying alternatives that may have majority support over the status quo in a broad range of social choice settings, including ordinal elections, committee elections, participatory budgeting and participatory legislation.

Such omission of alternatives is less likely to happen in the standard setting of social choice, where the objects voted upon are usually quite simple, e.g., candidate names. In other settings of social choice, for example, in participatory budgeting~\cite{goel2015knapsack,aussieone,abpb}, or when deciding upon a legislation, on in participatory legislation~\cite{alsina2018birth}, the objects upon which the voters vote on are more complex, and hence offering a closer list of alternatives to vote upon is more prone to the omission of majority-supported alternatives.
Such complex social choice settings might benefit from a more holistic process than merely voting on the available alternatives. Specifically, Reality-Aware Social Choice can be utilized with the following idea. For concreteness, we speak of a society wishing to amend a legislation.

Here, in a way which resembles the idea of a democratic action plan discussed in Section~\ref{section:action plans}, the society might start from the status quo, i.e., the current legislation on the topic, which we term the common draft. Then, a deliberation of possible edits and revisions to the common draft takes place, followed by a vote, in which participants in the society vote on the suggested revisions. The result of the vote would be a revised aggregated common draft, which incorporates, say, those edits to the previous common draft for which there is majority support.

While the newly aggregated common draft might be the end result, utilizing Reality-Aware Social Choice we may let it be the new common draft (Reality), and repeat the round of deliberation and voting. This can be continued until some time limit is reached, or, preferably, until there is consensus, or at least majority preference over the status quo, among the participants with regard to the resulting common draft.
The work done by Shahaf et al.~\cite{metricaggregation} is a first step towards a realization of such a holistic voting process.

\section{Discussion}\label{section:discussion}

Augmenting the standard model of social choice by introducing Reality as a distinguished alternative necessitates building a new foundation for this theory.
We have proposed a simple model of Reality-Aware Social Choice and considered four domain restrictions on voter preferences in this model. We have argued that Reality can then be used as a principled way for breaking Condorcet cycle, described several corresponding Reality-aware Condorcet-criteria and proposed an efficient Reality-aware Condorcet-consistent agenda.
We find it particularly satisfying that more than seven hundred years after Ramon Llull has proposed a principled show-of-hands agenda~\cite{colomer2013ramon},
it is still possible to offer a novel and improved computer-less agenda. 

Further study of the foundations of Reality-Aware Social Choice shall be conducted, especially in the face of its many applications, which include decision making mechanisms for elections~\cite{metricaggregation}, budgeting~\cite{budgetingone}, and legislation~\cite{textvoting}, and sybil-resilient voting rules~\cite{srsc}.

\bibliography{bib}
\bibliographystyle{plain}

\end{document}